\def\BibTeX{{\rm B\kern-.05em{\sc i\kern-.025em b}\kern-.08em
    T\kern-.1667em\lower.7ex\hbox{E}\kern-.125emX}}
\newcommand\mksf[1]{\ensuremath{\mathsf{#1}}}
\newcommand\supp{\mathop{\mathsf{supp}}}
\renewcommand\sharp{\texttt{\upshape{\#}}}
\newcommand\Set{\ensuremath{\mathsf{Set}}}
\newcommand\Cat{\ensuremath{\mathsf{Cat}}}
\newcommand\set[2]{\{#1 \mid #2\}}
\newcommand\reals{{\mathbb R}}
\newcommand\naturals{{\mathbb N}}
\newcommand\len[1]{|#1|}
\newcommand\lam[2]{\lambda#1\kern1pt.\kern1pt#2}
\newcommand\Hom{\mathop{\mathsf{Hom}}}
\newcommand\bx[1]{\text{\rm\texttt[}#1\text{\rm\texttt]}}
\newcommand\seq[3]{#1_{#2},\ldots,#1_{#3}}
\newcommand\longline{\begin{flushleft}\mbox{}\hrulefill\mbox{}\end{flushleft}}
\newcommand\subs{\mathrel{\subseteq}}
\newcommand\CC{{\cal C}}
\newcommand\DD{{\cal D}}
\newcommand\EE{{\cal E}}
\newcommand\FF{{\cal F}}
\newcommand\OO{{\cal O}}
\newcommand\TT{{\cal T}}
\newcommand\RR{{\cal R}}
\newcommand\id{\mksf{id}}
\newcommand\Imp{\Rightarrow}
\renewcommand\phi\varphi
\begin{document}

\title{Multisets and Distributions\thanks{This paper is dedicated to Herman Geuvers on the occasion of his 60th birthday. Both authors spent time in Nijmegen and enjoyed many conversations with Herman, who has always been curious on the many aspects of programming semantics and categorical aspects. We hope he enjoys this simplified presentation of a complicated law, followed by a categorical generalization!}} 			
\author{Dexter Kozen \and Alexandra Silva}		
\institute{Cornell University}
\maketitle

\begin{abstract}
We give a lightweight alternative construction of Jacobs's distributive law for multisets and distributions that does not involve any combinatorics. We first give a distributive law for lists and distributions, then apply a general theorem on 2-categories that allows properties of lists to be transferred automatically to multisets. The theorem states that equations between 2-cells are preserved by epic 2-natural transformations. In our application, the appropriate epic 2-natural transformation is defined in terms of the Parikh map, familiar from formal language theory, that takes a list to its multiset of elements.
\end{abstract}

\section{Introduction}

A notoriously difficult challenge in the semantics of probabilistic programming is the combination of probability and nondeterminism. The chief difficulty is the lack of an appropriate distributive law between the powerset and distribution monads \cite{Beck69}. Many researchers have grappled with this issue over the years, proposing various workarounds \cite{Affeldt21,ChenSanders09,DahlqvistParlantSilva18,GoyPetrisan20,HartogdeVink99,KeimelPlotkin17,Mislove00,MisloveOuaknineWorrell04,Varacca03,VaraccaWinskel06,WangHoffmannReps19,Zwart20,ZwartMarsden22}. In contrast, if instead of the powerset one considers the multiset monad, it has been known for a while that indeed the multiset monad does distribute over the probability distributions monad. This observation appeared in e.g.~\cite{KeimelPlotkin17,DahlqvistParlantSilva18,Dash21}, but its origin is hard to trace as it seems to have been folklore knowledge for some time. Whereas the aforementioned references justified the existence of the law quite convincingly from an operational point of view, what was missing for a long time was an actual explicit description. 

Recently, an important breakthrough was provided by Jacobs \cite{Jacobs21}, who presented multiple explicit descriptions of the distributive law for finite multisets over finite distributions. His treatment is a combination of categorical and combinatorial reasoning, involving the commutativity of multinomial distributions with \emph{hypergeometric distributions}, discrete probability distributions that calculate the likelihood an event occurring $k$ times in $n$ trials when sampling from a small population without replacement. 

In this paper, we give a lighter-weight alternative to Jacobs's proof that does not involve any combinatorics. Our approach is based on the premise that \emph{lists} are more intuitive and notationally less cumbersome than multisets. There is an intuitively simple distributive law for lists over distributions: given a list of distributions, sample them independently to obtain a list of outcomes; the probability of that list of outcomes is the product of the probabilities of the components. We have thus turned a list of distributions into a distribution over lists. We show that this operational intuition is correct using purely equational reasoning, then apply a general theorem on 2-categories that maps properties of lists down to properties of multisets.

The general theorem, which may be of independent interest, states that \emph{equations between natural transformations are preserved by epic 2-natural transformations}. In our application, the appropriate epic 2-natural transformation is defined in terms of the \emph{Parikh map}, familiar from formal language theory, that takes a list to its multiset of elements. This theorem allows monad and distributive laws involving multisets to be obtained automatically from monad and distributive laws involving lists.

The contributions of this paper are twofold. On the one hand, we present a concrete construction of an important distributive law between multisets and distributions. Jacobs~\cite{Jacobs21} presents multiple constructions, all of which required more involved machinery. Here our argument is that a simpler presentation, without combinatorics, of the Beck distributive law is possible, though the alternative ones in \cite{Jacobs21} are valuable in different ways (as the author also argues). In particular, we do not claim that our approach is useful for calculating probabilities of events; indeed, such calculations require combinatorics, which we have eliminated in our approach. On the other hand, our general result provides a useful tool that may be applicable in other contexts. 

We start with some brief preliminaries in \S\ref{sec:prelims} introducing the multiset and distribution monads as submonads of a common monad. In \S\ref{sec:ld} we present the ingredients needed to define the desired distributive law between multisets and distributions and show how it can be obtained from a distributive law between lists and distributions via the Parikh map. In \S\ref{sec:gen} we present a general theorem on $2$-categories and show that many of the concrete results of the previous section are instances of the general theorem. 

\section{The distribution and multiset monads}
\label{sec:prelims} 

The distribution and multiset monads are both submonads of a common generalization. Consider the endofunctor
\begin{align*}
& F:\Set\to\Set && FX = \set{f\in\reals_+^X}{\len{\supp f}<\infty},
\end{align*}
the set of nonnegative real-valued functions $f:X\to\reals$ with finite support, and for $h:X\to Y$,
\begin{align*}
& Fh:FX\to FY && Fh(f) = \lam{b\in Y}{\sum_{h(a)=b} f(a)}.
\end{align*}
Note that $F$ is covariant, unlike the usual hom-functor $\Hom(-,\reals_+)$.
For $f\in FX$, let $\len f=\sum_{a\in X}f(a)$.

The functor $F$ carries a monad structure $(F,\mu,\eta)$ with
$\eta:I\to F$ and $\mu:F^2\to F$, where for $a\in X$ and $f\in F^2X$,
\begin{align*}
\eta_X(a) &= \lam{b\in X}{[b=a]\footnotemark}\\
\mu_X(f) &= \lam{a\in X}{\sum_{g\in FX}{f(g)\cdot g(a)}}.
\end{align*}
\footnotetext{\small\emph{Iverson bracket}: $[\phi] = 1$ if $\phi$ is true, 0 if false}%
Restricting the range of functions to $\naturals$, we get a submonad $(M,\mu,\eta)$, where
\begin{align*}
MX &= \set{m\in FX}{\forall a\in X\ m(a)\in\naturals}.
\end{align*}
This is the \emph{multiset monad}. The value $m(a)$ is the multiplicity of $a\in X$ in the multiset $m$.
On the other hand, restricting to convex functions, we get a submonad $(D,\mu,\eta)$, where
\begin{align*}
DX &= \set{d\in FX}{\sum_{a\in X}d(a)=1}.
\end{align*}
This is the \emph{distribution monad}. The value $d(a)$ is the probability of $a\in X$ under the distribution $d$.

Clearly $\mu^M:M^2\to M$, as $\naturals$ is closed under the semiring operations, and $\mu^D:D^2\to D$, as a convex function of convex functions flattens to a convex function.

We will sometimes annotate $\eta$ and $\mu$ with superscripts $M$ and $D$ to emphasize that we are interpreting them in the respective submonads, but the definitions are the same. In more conventional notation, $\eta^D_X(a)=\delta_a$, the Dirac (point-mass) distribution on $a\in X$, and distributions can be expressed as weighted formal sums with real coefficients as in \cite{Jacobs21}.

The Eilenberg-Moore algebras for $F$ are the $\reals_+$-semimodules, 
although we do not need to know this for our study.

In the following sections, we will give a construction of a distributive law $\otimes^M:MF\to FM$, which contains the desired distributive law $MD\to DM$ as a special case.

\section{Lists and distributions}
\label{sec:ld}

In this section we set the stage by giving an intuitive account of our approach. It will be evident that it is possible to obtain the desired distributive law in a piecemeal fashion using only arguments in this section; but in \S\ref{sec:gen} we will show how to encapsulate these arguments in a general theorem to obtain them all at once. Nevertheless, the development of this section is essential motivation for understanding the more abstract version of \S\ref{sec:gen}.

\subsection{A distributive law $\otimes^L:LF\to FL$}

Let $(L,\mu^L,\eta^L)$ be the list monad. Define $\otimes^L_X:LFX\to FLX$, where
\begin{align*}
& \otimes^L_X(\bx{\seq f1n})\\
&= \lam{\bx{\seq a1m}\in LX}{\begin{cases}
\prod_{i=1}^n f_i(a_i) & \text{if $m=n$}\\
0 & \text{otherwise.\footnotemark[2]}
\end{cases}}
\end{align*}
\footnotetext[2]{\small$\bx{\seq a1n}$ denotes a list of $n$ elements, not to be confused with the Iverson bracket $[\phi]$. We trust the reader will be able to distinguish them by the difference in fonts and by context.}%
We abbreviate this by writing
\begin{align*}
\otimes^L_X(\bx{\seq f1n}) = 
\lam{\bx{\seq a1n}\in X^n}{\prod_{i=1}^n f_i(a_i)}
\end{align*}
with the convention that the right-hand side, while defined on all of $LX$, vanishes outside $X^n$.
The support $\supp\otimes^L_X(\bx{\seq f1n})$ is the cartesian product $\prod_{i=1}^n\supp f_i$.

\begin{theorem}
\label{thm:LF}
$\otimes^L:LF\to FL$ is a distributive law for $(L,\mu^L,\eta^L)$ over $(F,\mu,\eta)$. That is, it is a natural transformation and satisfies the axioms of Fig.~\ref{fig:distribL}.
\begin{figure}[t]
\begin{align*}
&
\begin{tikzpicture}[->, >=stealth', auto]
\scriptsize
\node (NW) {$LLF$};
\node (N) [right of=NW, node distance=18mm] {$LFL$};
\node (NE) [right of=N, node distance=18mm] {$FLL$};
\node (SW) [below of=NW, node distance=12mm] {$LF$};
\node (SE) [below of=NE, node distance=12mm] {$FL$};
\path (NW) edge node {$L\otimes^L$} (N);
\path (N) edge node {$\otimes^L L$} (NE);
\path (NW) edge node[swap] {$\mu^LF$} (SW);
\path (NE) edge node {$F\mu^L$} (SE);
\path (SW) edge node[swap] {$\otimes^L$} (SE);
\end{tikzpicture}
&&
\begin{tikzpicture}[->, >=stealth', auto, node distance=48mm]
\scriptsize
\node (N) {$L$};
\node (SW) [below left of=N, node distance=16mm, xshift=3mm] {$LF$};
\node (SE) [below right of=N, node distance=16mm, xshift=-3mm] {$FL$};
\path (N) edge node[swap] {$L\eta^F$} (SW);
\path (N) edge node {$\eta^FL$} (SE);
\path (SW) edge node[swap] {$\otimes^L$} (SE);
\end{tikzpicture}\\
&
\begin{tikzpicture}[->, >=stealth', auto]
\scriptsize
\node (NW) {$LFF$};
\node (N) [right of=NW, node distance=18mm] {$FLF$};
\node (NE) [right of=N, node distance=18mm] {$FFL$};
\node (SW) [below of=NW, node distance=12mm] {$LF$};
\node (SE) [below of=NE, node distance=12mm] {$FL$};
\path (NW) edge node {$\otimes^LF$} (N);
\path (N) edge node {$F\otimes^L$} (NE);
\path (NW) edge node[swap] {$L\mu^F$} (SW);
\path (NE) edge node {$\mu^FL$} (SE);
\path (SW) edge node[swap] {$\otimes^L$} (SE);
\end{tikzpicture}
&&
\begin{tikzpicture}[->, >=stealth', auto, node distance=48mm]
\scriptsize
\node (N) {$F$};
\node (SW) [below left of=N, node distance=16mm, xshift=3mm] {$LF$};
\node (SE) [below right of=N, node distance=16mm, xshift=-3mm] {$FL$};
\path (N) edge node[swap] {$\eta^LF$} (SW);
\path (N) edge node {$F\eta^L$} (SE);
\path (SW) edge node[swap] {$\otimes^L$} (SE);
\end{tikzpicture}
\end{align*}
\caption{Axioms for the distributive law $\otimes^L:LF\to FL$}
\label{fig:distribL}
\longline
\end{figure}
\end{theorem}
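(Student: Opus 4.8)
The plan is to verify the four axioms of Figure~\ref{fig:distribL} directly by unwinding the definition of $\otimes^L$, together with a preliminary check that $\otimes^L$ is natural. Naturality in $X$ amounts to checking, for $h:X\to Y$, that $FLh\circ\otimes^L_X = \otimes^L_Y\circ LFh$; since both sides land in $FLY$, I would evaluate at a list of distributions $\bx{\seq f1n}$ and at a list $\bx{\seq b1n}\in Y^n$, and observe that $FLh$ sums $\prod_i f_i(a_i)$ over all preimage lists $\bx{\seq a1n}$ with $h(a_i)=b_i$, which factors as $\prod_i\big(\sum_{h(a_i)=b_i}f_i(a_i)\big) = \prod_i (Fh\,f_i)(b_i)$ — exactly the right-hand side. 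The lists of length $\neq n$ contribute $0$ on both sides. The key bookkeeping fact throughout is the distributivity of product over sum, i.e.\ $\prod_i\sum_j x_{ij} = \sum_{(j_1,\dots,j_n)}\prod_i x_{ij_i}$, which is the combinatorial heart — but it is elementary, not the hypergeometric combinatorics of Jacobs.

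For the two unit triangles: the right triangle ($\eta^L$) says that for $f\in FX$, $\otimes^L_X(\bx f)$ should equal $F\eta^L_X(f)$; indeed $\otimes^L_X(\bx f) = \lam{\bx a\in X^1}{f(a)}$, which is the function sending the singleton list $\bx a$ to $f(a)$ and everything else to $0$, and this is precisely the pushforward of $f$ along $a\mapsto\bx a$. The left triangle ($\eta^F$) says $\otimes^L_X\circ L\eta^F_X = \eta^F_{LX}$, i.e.\ applying $\otimes^L$ to a list $\bx{\seq a1n}$ of point masses $\delta_{a_i}$ yields the point mass on $\bx{\seq a1n}$; evaluating, $\prod_i[a_i = c_i] = [\bx{\seq a1n} = \bx{\seq c1n}]$, which is immediate.

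For the multiplication square on the left ($\mu^L$): starting from a list of lists of distributions $\bx{\bx{f_{11},\dots},\dots,\bx{\dots,f_{k\ell_k}}}$, one route concatenates first then applies $\otimes^L$, yielding $\lam{\vec a}{\prod_{i,j}f_{ij}(a_{ij})}$ over tuples of the appropriate total length; the other route applies $\otimes^L$ to each inner list (getting distributions over lists), applies $\otimes^L$ again (getting a distribution over lists of lists with probability $\prod_i\prod_j f_{ij}(a_{ij})$), then pushes forward along concatenation $\mu^L$. Since concatenation is a bijection from tuples-of-blocks to the flat tuple once the block lengths are fixed, and mismatched lengths give $0$ on both sides, the pushforward does not actually merge distinct fibers here, so both sides agree. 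For the right square ($\mu^F$): from a list $\bx{\seq f1n}$ with each $f_i\in F^2X$, one route flattens each $f_i$ via $\mu^F$ then applies $\otimes^L$, giving $\lam{\vec a}{\prod_i\big(\sum_{g}f_i(g)g(a_i)\big)}$; the other applies $\otimes^L$ (distribution over lists of elements of $FX$), then $F\otimes^L$, then $\mu^F_{LX}$ — and expanding $\mu^F$ at a list $\bx{\seq a1n}$ sums over lists $\bx{\seq g1n}$ the quantity $\big(\prod_i f_i(g_i)\big)\cdot\big(\prod_i g_i(a_i)\big)$, which by the product-of-sums identity equals $\prod_i\sum_{g_i}f_i(g_i)g_i(a_i)$. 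This last square is where the product/sum interchange does the real work, so I expect it to be the main (though still routine) obstacle; the rest is essentially definitional. Throughout I would use the observation, already noted in the excerpt, that $\supp\otimes^L_X(\bx{\seq f1n}) = \prod_i\supp f_i$ is finite, so all sums are finite and the manipulations are justified.
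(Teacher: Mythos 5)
Your proposal is correct and follows essentially the same route as the paper's own proof in \S\ref{sec:distribL}: direct element-wise verification of naturality, the two unit triangles, and the two multiplication squares, with distributivity of product over sum doing the work in the naturality and $\mu^F$ cases. (You even make explicit a point the paper's displayed computation glosses over, namely that concatenation is injective on the support in the $\mu^L$ square.)
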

\begin{proof}
The proof uses only elementary equational reasoning and is given in its entirety in \S\ref{sec:distribL}.
\end{proof}

Applied to a list of $n$ distributions over $X$, $\otimes^L_X$ gives the joint distribution on the cartesian product $X^n$ obtained by sampling the component distributions independently.
\begin{align*}
\otimes^L_X(\bx{\seq d1n})(\bx{\seq a1n}) = \prod_{i=1}^n d_i(a_i).
\end{align*}
As $\otimes^L$ produces a distribution on $LX$ when applied to a list of distributions on $X$, it specializes to $\otimes^L:LD\to DL$.

We would like to have a similar distributive law $\otimes^M:MF\to FM$, which will also specialize to $\otimes^M:MD\to DM$ when applied to a multiset of distributions. We will show how to derive $\otimes^M$ from the distributive law $\otimes^L:LF\to FL$ and a natural transformation $\sharp:L\to M$, the \emph{Parikh map} familiar from formal language theory.

\subsection{The Parikh map $\sharp:L\to M$}

Consider the monoid $(LX,\cdot,\bx{})$, where $(\cdot)$ denotes list concatenation (often elided) and $\bx{}$ is the empty list. This is the free monoid on generators $X$. Consider also the monoid $(MX,+,0)$, where $+$ is pointwise sum of real-valued functions and $0$ is the constant-zero function, representing multiset union and the empty multiset, respectively. This is the free commutative monoid on generators $X$. The two monoids are connected by the \emph{Parikh map} $\sharp_X:LX\to MX$, where $\sharp_X(x)(a)$ gives the number of occurrences of $a\in X$ in the list $x\in LX$. This is a monoid homomorphism from the free monoid on generators $X$ to its commutative image. Note that $\sharp_X$ is an epimorphism, as every multiset is the image of a list. Jacobs \cite{Jacobs21} refers to the map $\sharp_X$ as \emph{accumulation} (\emph{acc}).

More generally, we will show below that $\sharp$ is a morphism of monads
\begin{align*}
\sharp:(L,\mu^L,\eta^L)\to(M,\mu^M,\eta^M)
\end{align*}
(Theorem \ref{thm:LLMM}). 
This means that $\sharp:L\to M$ is a natural transformation that commutes with $\mu$ and $\eta$; in other words, it is a monoid homomorphism, regarding monads as monoids over the category of endofunctors.

Because $\sharp_X$ is a monoid homomorphism, we have
\begin{align*}
\sharp_X(\bx{}) &= 0 & \sharp_X(xy) &= \sharp_X(x) + \sharp_X(y).
\end{align*}
On generators $\bx a$, 
\begin{align*}
\sharp_X(\bx{a}) &= \lam{b\in X}{[b=a]} = \eta_X(a).
\end{align*}
Using these facts, we can show that $\sharp:L\to M$ is natural transformation. For $f:X\to Y$,
\begin{align*}
& Mf(\sharp_X(\bx{\seq a1n}))\\
&= Mf(\sum_{i=1}^n \sharp_X(\bx{a_i}))
= Mf(\sum_{i=1}^n \lam{a\in X}{[a=a_i]})\\
&= \lam{b\in Y}{\sum_{f(a)=b} \sum_{i=1}^n [a=a_i]}
= \lam{b\in Y}{\sum_{i=1}^n [b=f(a_i)]}\\
&= \sum_{i=1}^n \sharp_Y(\bx{f(a_i)})
= \sharp_Y(\bx{f(a_1),\ldots,f(a_n)})\\
&= \sharp_Y(Lf(\bx{\seq a1n})),
\end{align*}
thus $Mf\circ\sharp_X = \sharp_Y\circ Lf$.

\subsection{A distributive law $\otimes^M:MF\to FM$}

We first define the natural transformation $\otimes^M:MF\to FM$, then show that it satisfies the requisite properties of a distributive law. For any $x\in LX$, define
\begin{align*}
S(x)
&= \set{y\in LX}{\sharp_X(y)=\sharp_X(x)}
= \sharp_X^{-1}(\sharp_X(x)).
\end{align*}
Then $y\in S(x)$ iff $y$ and $x$ have the same length, say $n$, and $y$ is a permutation of $x$; that is,
\begin{align*}
S(\bx{\seq b1n}) &= \set{\bx{\seq b{\sigma(1)}{\sigma(n)}}}{\text{$\sigma$ is a permutation on $\{1,\ldots,n\}$}}.
\end{align*}

\begin{lemma}
\label{lem:pi}
Let $S=S(\bx{\seq b1n})$. For any fixed permutation $\pi$ on $\{1,\ldots,n\}$, the map
\begin{align}
\bx{\seq a1n} \mapsto \bx{\seq a{\pi(1)}{\pi(n)}}\label{eq:pi}
\end{align}
is a bijection $S\to S$.
\end{lemma}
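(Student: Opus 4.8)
The plan is to show the map in~\eqref{eq:pi} is well-defined (lands in $S$) and has a two-sided inverse, both of which follow from basic properties of the Parikh map and of permutation composition. Write $\rho_\pi$ for the map $\bx{\seq a1n}\mapsto\bx{\seq a{\pi(1)}{\pi(n)}}$ defined on lists of length $n$.

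First I would verify that $\rho_\pi$ sends $S$ into $S$. Take $y=\bx{\seq a1n}\in S$, so $\sharp_X(y)=\sharp_X(\bx{\seq b1n})$. Since $\rho_\pi(y)=\bx{\seq a{\pi(1)}{\pi(n)}}$ is a list containing exactly the same elements as $y$ with the same multiplicities (permuting the index set leaves the multiset of entries unchanged), we get $\sharp_X(\rho_\pi(y))=\sharp_X(y)=\sharp_X(\bx{\seq b1n})$, hence $\rho_\pi(y)\in S$. Concretely, using $\sharp_X(\bx{\seq a1n})=\sum_{i=1}^n\sharp_X(\bx{a_i})$ and commutativity of $+$ in $MX$, reindexing the sum by $\pi$ changes nothing: $\sum_{i=1}^n\sharp_X(\bx{a_{\pi(i)}})=\sum_{i=1}^n\sharp_X(\bx{a_i})$.

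Next I would exhibit the inverse. The natural candidate is $\rho_{\pi^{-1}}$. A direct computation gives $\rho_{\pi^{-1}}(\rho_\pi(\bx{\seq a1n}))$: applying $\rho_\pi$ produces the list whose $i$-th entry is $a_{\pi(i)}$; applying $\rho_{\pi^{-1}}$ then takes its $j$-th entry to be the $\pi^{-1}(j)$-th entry of that list, namely $a_{\pi(\pi^{-1}(j))}=a_j$. So the composite is the identity on $S$, and symmetrically $\rho_\pi\circ\rho_{\pi^{-1}}=\id$. Since $\rho_{\pi^{-1}}$ also maps $S$ to $S$ by the first step (applied to the permutation $\pi^{-1}$), this shows $\rho_\pi:S\to S$ is a bijection.

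There is no real obstacle here; the only point deserving a line of care is getting the index bookkeeping in the composition $\rho_{\pi^{-1}}\circ\rho_\pi$ right (it is easy to write $\pi^{-1}(\pi(j))$ where one means $\pi(\pi^{-1}(j))$, though both equal $j$), and noting explicitly that $S$ is closed under $\rho_\pi$ for \emph{every} permutation, which is what lets the inverse $\rho_{\pi^{-1}}$ also be viewed as a self-map of $S$. Everything else is immediate from the definition $S=\sharp_X^{-1}(\sharp_X(\bx{\seq b1n}))$ together with the additivity of $\sharp_X$ over concatenation.
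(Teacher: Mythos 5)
Your proof is correct, and it takes a genuinely different route from the paper's for the key step. Both arguments begin the same way, by observing that $S$ is closed under the permutation map $\rho_\pi$ (you justify this via additivity of $\sharp_X$ and commutativity of $+$ in $MX$, which is exactly the right reason). Where you diverge is in upgrading ``self-map of $S$'' to ``bijection'': the paper notes that \eqref{eq:pi} is injective and that its image is contained in $S$, and then concludes by a cardinality argument ($|\rho_\pi(S)|\ge|S|$ forces equality), which implicitly relies on $S$ being a finite set. You instead exhibit the explicit two-sided inverse $\rho_{\pi^{-1}}$, check the index bookkeeping $a_{\pi(\pi^{-1}(j))}=a_j$ correctly, and note that the closure argument applies equally to $\pi^{-1}$ so that the inverse is also a self-map of $S$. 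Your version is slightly longer but fully constructive and does not depend on finiteness of $S$; the paper's is shorter but leans on the pigeonhole-style fact that an injective self-map of a finite set is surjective. Either is a complete and valid proof of the lemma.
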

\begin{proof}
Since $S$ is closed under permutations,
\begin{align*}
\set{\bx{\seq a{\pi(1)}{\pi(n)}}}{\bx{\seq a1n}\in S} \subs S.
\end{align*}
But the map \eqref{eq:pi} is injective, so
\begin{align*}
\len{\set{\bx{\seq a{\pi(1)}{\pi(n)}}}{\bx{\seq a1n}\in S}} \ge \len S,
\end{align*}
therefore equality holds, so \eqref{eq:pi} must be a bijection on $S$.
\qed
\end{proof}

The desired natural transformation $\otimes^M:MF\to FM$ is defined in the following lemma. This is essentially the same as Jacob's third characterization \cite[Equation (5)]{Jacobs21}.
\begin{lemma}
\label{lem:otimesdef}
There is a unique natural transformation $\otimes^M:MF\to FM$ such that the following diagram commutes:
\begin{align}
\begin{array}c
\begin{tikzpicture}[->, >=stealth', auto, node distance=20mm]
\small
\node (NW) {$LF$};
\node (NE) [right of=NW] {$FL$};
\node (SW) [below of=NW, node distance=12mm] {$MF$};
\node (SE) [below of=NE, node distance=12mm] {$FM$};
\path (NW) edge node {$\otimes^L$} (NE);
\path (NW) edge node[swap] {$\sharp F$} (SW);
\path (NE) edge node {$F\sharp$} (SE);
\path (SW) edge node[swap] {$\otimes^M$} (SE);
\end{tikzpicture}
\end{array}
\label{eq:otimesdef}
\end{align}
\end{lemma}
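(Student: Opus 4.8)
The plan is to prove existence and uniqueness separately, exploiting that $\sharp_X$ is an epimorphism in $\Set$ (it is surjective). Uniqueness is immediate: since $\sharp F$ has components $\sharp_{FX}$ which are surjective, the diagram \eqref{eq:otimesdef} forces $\otimes^M_X \circ \sharp_{FX} = F\sharp_X \circ \otimes^L_X$, and a function out of $MFX$ is determined by its precomposition with a surjection onto $MFX$. So at most one $\otimes^M_X$ can exist for each $X$, and it must be natural because it is pinned down by natural data.

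For existence, the key step is showing that the composite $F\sharp_X \circ \otimes^L_X : LFX \to FMX$ is constant on the fibers of $\sharp_{FX}$, so that it factors through $MFX$. Concretely, I need: if $\sharp_{FX}(\bx{\seq f1n}) = \sharp_{FX}(\bx{\seq g1n})$, i.e. $\bx{\seq g1n}$ is a permutation $\bx{f_{\pi(1)},\dots,f_{\pi(n)}}$ of $\bx{\seq f1n}$, then $F\sharp_X(\otimes^L_X(\bx{\seq f1n})) = F\sharp_X(\otimes^L_X(\bx{f_{\pi(1)},\dots,f_{\pi(n)}}))$. Unwinding the definitions, $F\sharp_X(\otimes^L_X(\bx{\seq f1n}))$ sends a multiset $m \in MX$ to $\sum_{\bx{\seq a1n} \in \sharp_X^{-1}(m)} \prod_{i=1}^n f_i(a_i)$, where the sum ranges over lists of length $n$ (the sum being $0$ if $\len m \ne n$). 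The permuted version gives $\sum_{\bx{\seq a1n} \in \sharp_X^{-1}(m)} \prod_{i=1}^n f_{\pi(i)}(a_i)$. Reindexing the inner product by $i \mapsto \pi^{-1}(i)$ turns this into $\sum_{\bx{\seq a1n}} \prod_{i=1}^n f_i(a_{\pi^{-1}(i)})$, and now Lemma \ref{lem:pi} says $\bx{\seq a1n} \mapsto \bx{a_{\pi^{-1}(1)},\dots,a_{\pi^{-1}(n)}}$ is a bijection of $S = \sharp_X^{-1}(m)$ onto itself, so it merely permutes the summands and the two sums agree. This is the crux of the argument, and essentially the only place combinatorial content enters — but it is a single clean application of Lemma \ref{lem:pi}, not a computation with hypergeometric distributions.

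Having established that $F\sharp_X \circ \otimes^L_X$ factors uniquely as $\otimes^M_X \circ \sharp_{FX}$ with $\otimes^M_X : MFX \to FMX$, it remains to check that the resulting family $\otimes^M = (\otimes^M_X)_X$ is natural in $X$. For this I would use that $\sharp$ and $\otimes^L$ are natural (Theorem \ref{thm:LF} and the naturality of $\sharp$ verified just above the lemma) together with the surjectivity of $\sharp F$ again: given $h : X \to Y$, both $\otimes^M_Y \circ MFh$ and $FMh \circ \otimes^M_X$ become equal after precomposition with $\sharp_{FX}$ (chasing the outer square built from the defining square for $X$, the defining square for $Y$, and the naturality squares for $\sharp$ and $\otimes^L$), and since $\sharp_{FX}$ is epic this forces $\otimes^M_Y \circ MFh = FMh \circ \otimes^M_X$. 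The main obstacle is really just the fiber-constancy step above; once the Parikh map is seen to be the right quotient and Lemma \ref{lem:pi} is in hand, everything else is formal diagram-chasing with epimorphisms.
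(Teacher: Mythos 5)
Your proposal is correct and follows essentially the same route as the paper: well-definedness of the factorization through the surjective Parikh map is established by permutation-invariance of the sum via Lemma \ref{lem:pi}, and naturality is obtained by the cube diagram chase together with cancellation of the epimorphism $\sharp_{FX}$. The only cosmetic difference is that you phrase existence as fiber-constancy of $F\sharp_X\circ\otimes^L_X$ rather than writing the explicit formula first, which is the same content.
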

\begin{proof}
Let
\begin{align*}
X^{(n)}=\set{m\in MX}{\sum_{a\in X}m(a)=n},
\end{align*}
the multisets over $X$ of size $n$. For a given $\bx{\seq f1n}\in LFX$, define
\begin{align*}
& \otimes^M_X(\sharp_{FX}(\bx{\seq f1n}))\\
&= \lam{y\in X^{(n)}}{\sum_{\bx{\seq a1n}\in\sharp_X^{-1}(y)} \prod_{i=1}^n f_i(a_i)}.
\end{align*}
Again by convention, it is assumed that the function on the right-hand side is defined on all of $MX$, but vanishes on $MX\setminus X^{(n)}$.
This defines $\otimes^M_X$ on the multiset $\sharp_{FX}(\bx{\seq f1n})$; we must show that the definition is independent of the choice of $\bx{\seq f1n}$.

For any $\bx{\seq b1n}\in LX$, we have $\sharp_X(\bx{\seq b1n})\in MX$, and
\begin{align*}
& \otimes^M_X(\sharp_{FX}(\bx{\seq f1n}))(\sharp_X(\bx{\seq b1n}))\\
&= {\sum_{\bx{\seq a1n}\in\sharp_X^{-1}(\sharp_X(\bx{\seq b1n}))} \prod_{i=1}^n f_i(a_i)}\\
&= {\sum_{\bx{\seq a1n}\in S(\bx{\seq b1n})} \prod_{i=1}^n f_i(a_i)}.
\end{align*}
Then for any permutation $\sigma:\{1,\ldots,n\}\to\{1,\ldots,n\}$,
\begin{align}
& \otimes^M_X(\sharp_{FX}(\bx{\seq f{\sigma(1)}{\sigma(n)}}))(\sharp_X(\bx{\seq b1n}))\nonumber\\
&= {\sum_{\bx{\seq a1n}\in S(\bx{\seq b1n})} \prod_{i=1}^n f_{\sigma(i)}(a_i)}\nonumber\\
&= {\sum_{\bx{\seq a1n}\in S(\bx{\seq b1n})} \prod_{i=1}^n f_i(a_{\sigma^{-1}(i)})}\label{eq:perms1}\\
&= {\sum_{\bx{\seq a{\sigma^{-1}(1)}{\sigma^{-1}(n)}}\in S(\bx{\seq b1n})} \prod_{i=1}^n f_i(a_{\sigma^{-1}(i)})}\label{eq:perms2}\\
&= {\sum_{\bx{\seq a1n}\in S(\bx{\seq b1n})} \prod_{i=1}^n f_i(a_i)}.\label{eq:perms3}
\end{align}
The inference \eqref{eq:perms1} is from the commutativity of multiplication,
\eqref{eq:perms2} is from Lemma \ref{lem:pi}, and \eqref{eq:perms3} comes from reindexing.
It follows that $\otimes^M_X(\sharp_{FX}(\bx{\seq f1n}))$ and $\otimes^M_X(\sharp_{FX}(\bx{\seq f{\sigma(1)}{\sigma(n)}}))$ agree on all elements of the form $\sharp_X(\bx{\seq b1n})\in MX$, and both take value 0 on elements of the form $\sharp_X(\bx{\seq b1m})$ for $m\ne n$. But this is all of $MX$, as $\sharp_X$ is surjective.

The equation \eqref{eq:otimesdef} now follows from the definitions of $F\sharp_X$ and $\otimes^L_X$:
\begin{align*}
& \otimes^M_X(\sharp_{FX}(\bx{\seq f1n}))\\
&= \lam{y\in X^{(n)}}{\sum_{\bx{\seq a1n}\in\sharp_X^{-1}(y)} \prod_{i=1}^n f_i(a_i)}\\
&= \lam{y\in X^{(n)}}{\sum_{\sharp_X(\bx{\seq a1n})=y} \prod_{i=1}^n f_i(a_i)}\\
&= F\sharp_X(\otimes^L_X(\bx{\seq f1n})).
\end{align*}

It remains to show the naturality of $\otimes^M$. That is, for $f:X\to Y$,
\begin{align}
FMf\circ\otimes^M_X = \otimes^M_Y\circ MFf.\label{eq:natotimesM}
\end{align}
We will take this opportunity to illustrate the general observation that allows us to map properties interest from $L$ down to $M$ via $\sharp$. Suppose we know that $\otimes^L$ is a natural transformation.
\begin{figure}[t]
\begin{align*}
\begin{array}c
\begin{tikzpicture}[->, >=stealth', auto, node distance=28mm]
\scriptsize
\node (NW) {$LFX$};
\node (NE) [right of=NW] {$FLX$};
\node (SW) [below right of=NW, node distance=14mm, xshift=6mm] {$LFY$};
\node (SE) [below right of=NE, node distance=14mm, xshift=10mm] {$FLY$};
\path (NW) edge node {$\otimes^L_X$} (NE);
\path (NW) edge node[swap,xshift=1mm] {$LFf$} (SW);
\path (NE) edge node[xshift=-1mm] {$FLf$} (SE);
\path (SW) edge node[pos=.6] {$\otimes^L_Y$} (SE);
\node (uNW) [below of=NW, node distance=20mm] {$MFX$};
\node (uNE) [below of=NE, node distance=20mm] {$FMX$};
\node (uSW) [below of=SW, node distance=20mm] {$MFY$};
\node (uSE) [below of=SE, node distance=20mm] {$FMY$};
\path (uNW) edge node[swap, pos=.8] {$\otimes^M_X$} (uNE);
\path (uNW) edge node[swap,xshift=1mm] {$MFf$} (uSW);
\path (uNE) edge node[xshift=-1mm] {$FMf$} (uSE);
\path (uSW) edge node[swap] {$\otimes^M_Y$} (uSE);
\path (NW) edge node[swap] {$\sharp_{FX}$} (uNW);
\path (NE) edge node[pos=.75] {$F\sharp_X$} (uNE);
\path (SW) edge node[swap,pos=.25] {$\sharp_{FY}$} (uSW);
\path (SE) edge node {$F\sharp_Y$} (uSE);
\end{tikzpicture}
\end{array}
\end{align*}
\caption{Naturality of $\otimes^M$ from $\otimes^L$}
\label{fig:nattrans}
\longline
\end{figure}
In Fig.~\ref{fig:nattrans}, the commutativity of the top face asserts this fact, and we wish to derive the same for the bottom face. The front and back vertical faces commute by \eqref{eq:otimesdef}. The left and right vertical faces commute by the naturality of $\sharp$. Using these facts, we can show the following paths in Fig.~\ref{fig:nattrans} are equivalent:
\begin{align}
& LFX \to MFX \to FMX \to FMY\nonumber\\
&=\ LFX \to FLX \to FLY \to FMY \label{eq:paths1}\\
&=\ LFX \to LFY \to FLY \to FMY \label{eq:paths2}\\
&=\ LFX \to MFX \to MFY \to FMY \label{eq:paths3}
\end{align}
The inference \eqref{eq:paths1} is from the back and right vertical faces,
\eqref{eq:paths2} is from the top face, and 
\eqref{eq:paths3} is from the left and front vertical faces.
Thus
\begin{align*}
FMf\circ\otimes_X^M\circ\sharp_{FX} = \otimes_Y^M\circ MFf\circ\sharp_{FX}.
\end{align*}
But since $\sharp_{FX}:LFX\to MFX$ is an epimorphism, we can conclude that
\begin{align*}
FMf\circ\otimes_X^M = \otimes_Y^M\circ MFf.
\tag*{\qed}
\end{align*}
\end{proof}

We can apply the same technique to show that $\otimes^M$ satisfies the axioms of distributive laws. These are the same as those of Fig.~\ref{fig:distribL} with $M$ in place of $L$.
Assuming these diagrams hold for $L$ (which we still have to prove---see \S\ref{sec:distribL}), we can derive them for $M$. For example,
\begin{figure}[t]
\begin{align*}
\begin{tikzpicture}[->, >=stealth', auto, node distance=24mm]
\scriptsize
\node (NW) {$LLF$};
\node (N) [right of=NW] {$LFL$};
\node (NE) [right of=N] {$FLL$};
\node (SW) [below right of=NW, node distance=10mm, xshift=4mm] {$LF$};
\node (SE) [below right of=NE, node distance=10mm, xshift=10mm] {$FL$};
\path (NW) edge node {$L\otimes^L$} (N);
\path (N) edge node {$\otimes^L L$} (NE);
\path (NW) edge node[swap,xshift=2mm] {$\mu^LF$} (SW);
\path (NE) edge node {$F\mu^L$} (SE);
\path (SW) edge node {$\otimes^L$} (SE);
\node (uNW) [below of=NW, node distance=20mm] {$MMF$};
\node (uN) [below of=N, node distance=20mm] {$MFM$};
\node (uNE) [below of=NE, node distance=20mm] {$FMM$};
\node (uSW) [below of=SW, node distance=20mm] {$MF$};
\node (uSE) [below of=SE, node distance=20mm] {$FM$};
\path (uNW) edge node[swap, pos=.8] {$M\otimes^M$} (uN);
\path (uN) edge node[swap] {$\otimes^M M$} (uNE);
\path (uNW) edge node[swap,xshift=1mm] {$\mu^MF$} (uSW);
\path (uNE) edge node[xshift=-1mm] {$F\mu^M$} (uSE);
\path (uSW) edge node[swap] {$\otimes^M$} (uSE);
\path (NW) edge node[swap] {$\sharp\sharp F$} (uNW);
\path (N) edge node[pos=.75] {$\sharp F\sharp$} (uN);
\path (NE) edge node[pos=.75] {$F\sharp\sharp$} (uNE);
\path (SW) edge node[swap, pos=.4] {$\sharp F$} (uSW);
\path (SE) edge node {$F\sharp$} (uSE);
\end{tikzpicture}
\end{align*}
\caption{An axiom of distributive laws}
\label{fig:distrib}
\longline
\end{figure}
to transfer the upper left property of Fig.~\ref{fig:distribL} from $L$ to $M$, we use the diagram of Fig.~\ref{fig:distrib}, where
\begin{align*}
& \sharp\sharp = M\sharp\circ\sharp L = \sharp M\circ L\sharp\\
& \sharp F\sharp = MF\sharp\circ\sharp FL = \sharp FM\circ LF\sharp.
\end{align*}
The top face of Fig.~\ref{fig:distrib} asserts the property of interest for $L$, and we wish to derive the same for the bottom face. We obtain an equivalence of paths
\begin{align*}
& LLF \to MMF \to MFM \to FMM \to FM\\
&=\ LLF \to MMF \to MF \to FM
\end{align*}
using the commutativity of the top and vertical side faces, then use the fact that $\sharp\sharp F:LLF\to MMF$ is an epimorphism to conclude that the bottom face commutes. The commutativity of the top face is proved below in \S\ref{sec:distribL}; the side faces are from \eqref{eq:otimesdef}, the naturality of $\sharp$, and the monad laws.

The verification of the other axioms proceeds similarly. In fact, in \S\ref{sec:gen} we will prove a general theorem that allows us to derive these all at once.

\begin{theorem}
\label{thm:distribM}
$\otimes^M:MF\to FM$ is a distributive law. Restricted to multisets of distributions, it specializes to $\otimes^M:MD\to DM$.
\end{theorem}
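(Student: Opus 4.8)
The plan is to reuse, for each of the four distributive-law axioms of Fig.~\ref{fig:distribL} (now read with $M$ in place of $L$), exactly the ``prism-pasting'' argument already carried out for the upper-left axiom in Fig.~\ref{fig:distrib}, together with the observation that whiskered composites of the Parikh map remain epic. Naturality of $\otimes^M$ is already in hand from Lemma~\ref{lem:otimesdef}, so only the four axiom diagrams remain.

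First I would fix an axiom, say the $\mu^F$-axiom (lower left of Fig.~\ref{fig:distribL}). I build the prism whose top face is the $L$-instance of that axiom, whose bottom face is the desired $M$-instance, and whose vertical edges are the appropriate whiskerings of $\sharp$ linking the $L$-layer to the $M$-layer (here $\sharp FF$, $F\sharp F$, $FF\sharp$, $\sharp F$, $F\sharp$). The top face commutes by Theorem~\ref{thm:LF}. Each vertical side quadrilateral commutes by a short diagram chase that uses only: equation~\eqref{eq:otimesdef} (to slide $\otimes^L$ past $\sharp$ into $\otimes^M$); the naturality of $\sharp:L\to M$ applied at components such as $\otimes^M_X$, $\mu^F_X$, $\eta^F_X$, $\sharp_{FX}$; the naturality of $\mu^F$ and $\eta^F$ (they belong to the monad $F$); and, for the axioms mentioning $\mu^L$ or $\eta^L$, the fact from Theorem~\ref{thm:LLMM} that $\sharp$ is a monad morphism, i.e.\ $\sharp\circ\mu^L=\mu^M\circ\sharp\sharp$ and $\sharp\circ\eta^L=\eta^M$. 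Pasting the top face against the side faces then yields the $M$-axiom precomposed with the source-corner vertical map---one of $\sharp FF$, $\sharp\sharp F$, or $\sharp$ itself---and I cancel this map because it is epic: $\sharp_Z$ is epic for every object $Z$, right-whiskering by a functor leaves it componentwise $\sharp$ hence epic, left-whiskering by $M$ or $F$ stays epic because $M$ and $F$ preserve surjections in $\Set$, and composites of epics are epic. The unit axiom involving $\eta^L$ needs no cancellation at all: using $\eta^M=\sharp\circ\eta^L$, equation~\eqref{eq:otimesdef} and the $L$-axiom give $\otimes^M\circ\eta^MF=F\sharp\circ\otimes^L\circ\eta^LF=F\sharp\circ F\eta^L=F\eta^M$ directly.

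For the specialization to distributions I would note that $\otimes^L$ restricts to $\otimes^L:LD\to DL$ (observed already in \S\ref{sec:ld}) and that $F\sharp_X$ carries $DLX$ into $DMX$, being simply $D\sharp_X$ there. Since every element of $MDX$ is $\sharp_{FX}$ of a list of distributions and $\sharp_{FX}$ agrees with $\sharp_{DX}$ on $LDX$, the defining square~\eqref{eq:otimesdef} shows that $\otimes^M$ maps $MDX$ into $DMX$. Equivalently, a one-line computation suffices: $\sum_{y}\otimes^M_X(\sharp_{FX}(\bx{\seq d1n}))(y)=\sum_{\bx{\seq a1n}\in X^n}\prod_{i=1}^n d_i(a_i)=\prod_{i=1}^n\sum_{a\in X}d_i(a)=1$, since the fibers $\sharp_X^{-1}(y)$ partition $X^n$ as $y$ ranges over $X^{(n)}$.

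I do not expect a genuine obstacle here: the content is the bookkeeping of drawing each prism and verifying its side faces, which is routine given the worked example of Fig.~\ref{fig:distrib}, and \S\ref{sec:gen} will in any case subsume all four cases in one general lemma on epic $2$-natural transformations. The only point deserving a moment's care is the claim that the source-corner maps---and more generally left-whiskerings such as $M\sharp$ and $F\sharp$---are epic; this is immediate from the explicit description of $Mh$ and $Fh$ on a surjection $h$.
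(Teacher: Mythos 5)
Your proposal is correct and follows essentially the same route as the paper: the four axioms are transferred from $L$ to $M$ by the prism-pasting argument of Fig.~\ref{fig:distrib} (subsumed by Corollary~\ref{cor:nattranssharp}), with cancellation of the epic vertical maps, and the specialization to distributions is verified by exactly the same computation summing $\otimes^M_X(\sharp_{FX}(\bx{\seq d1n}))$ over the fibers of $\sharp_X$, which partition $X^n$.
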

\begin{proof}
The verification of the four distributive law axioms can be done individually along the lines of the argument above, or all at once using Corollary \ref{cor:nattranssharp} below.

For the last statement, we observe that for distributions $\seq d1n$,
\begin{align*}
& \sum_{m\in MX}\otimes^M_X(\sharp_{FX}(\bx{\seq d1n}))(m)\\
&= \sum_{m\in MX}\sum_{\sharp_X(\bx{\seq a1n})=m} \prod_{i=1}^n d_i(a_i)\\
&= \sum_{\bx{\seq a1n}\in LX}\prod_{i=1}^n d_i(a_i)\ =\ \prod_{i=1}^n\sum_{a\in X} d_i(a)\ =\ 1.
\tag*{\qed}
\end{align*}
\end{proof}
\ 
\section{A general result}
\label{sec:gen}


In this section we show how results of the previous section can be obtained as a consequence of a general theorem on 2-categories to the effect that \emph{equations between 2-cells are preserved under epic 2-natural transformations}. In our application, this theorem allows equations involving $L$ and $F$ to be mapped down to equations involving $M$ and $F$ via a 2-natural transformation defined in terms of $\sharp$.

Recall that in a 2-category, composition of 1-cells and composition of 2-cells are called \emph{horizontal} and \emph{vertical composition}, respectively. We will write $\circ$ for vertical composition and denote horizontal composition by juxtaposition. Horizontal composition also acts on 2-cells and satisfies the property:
If $F,F',F'':\CC\to\DD$ and $G,G',G'':\DD\to\EE$ are 1-cells and $\phi:F\to F'$, $\phi':F'\to F''$, $\psi:G\to G'$, and $\psi':G'\to G''$ are 2-cells, then
\begin{align}
(\phi'\circ\phi)(\psi'\circ\psi) &= (\phi'\psi')\circ(\phi\psi) : FG\to F''G''.\label{eq:2cell}
\end{align}
\begin{align*}
\begin{tikzpicture}[->, >=stealth', auto, bend angle=60, node distance=25mm]
\small
\node (CC) {$\CC$};
\node (DD) [right of=CC] {$\DD$};
\node (EE) [right of=DD] {$\EE$};
\path (CC) edge node[pos=.2] {$F'$} (DD);
\path (CC) edge[bend left] node {$F$} (DD);
\path (CC) edge[bend right] node[swap] {$F''$} (DD);
\path (DD) edge node[pos=.2] {$G'$} (EE);
\path (DD) edge[bend left] node {$G$} (EE);
\path (DD) edge[bend right] node[swap] {$G''$} (EE);
\draw [double distance=2pt, arrows = {-Implies[]}] (1.25,.7) -- (1.25,.1);
\node at (1.6,.4) {$\phi$};
\draw [double distance=2pt, arrows = {-Implies[]}] (1.25,-.1) -- (1.25,-.7);
\node at (1.6,-.3) {$\phi'$};
\draw [double distance=2pt, arrows = {-Implies[]}] (3.75,.7) -- (3.75,.1);
\node at (4.1,.4) {$\psi$};
\draw [double distance=2pt, arrows = {-Implies[]}] (3.75,-.1) -- (3.75,-.7);
\node at (4.1,-.3) {$\psi'$};
\node at (6,0) {$=$};
\node (CCp) at (7,0) {$\CC$};
\node (EEp) at (10,0) {$\EE$};
\path (CCp) edge[bend angle=50, bend left] node {$FG$} (EEp);
\path (CCp) edge node[pos=.2] {$F'G'$} (EEp);
\path (CCp) edge[bend angle=50, bend right] node[swap] {$F''G''$} (EEp);
\draw [double distance=2pt, arrows = {-Implies[]}] (8.5,.7) -- (8.5,.1);
\node at (9,.4) {$\phi\psi$};
\draw [double distance=2pt, arrows = {-Implies[]}] (8.5,-.1) -- (8.5,-.7);
\node at (9,-.3) {$\phi'\psi'$};
\end{tikzpicture}
\end{align*}

Categories, functors, and natural transformations form a 2-category \Cat, in which the 0-cells are categories, the 1-cells are functors, and the 2-cells are natural transformations. In \Cat, horizontal composition on natural transformations is defined by: for $\phi:F\to G$ and $\psi:H\to K$,
\begin{align}
\phi\psi:FH\to GK &= \phi K\circ F\psi = G\psi\circ \phi H\label{eq:horizC}
\end{align}
\begin{align*}
\begin{tikzpicture}[->, >=stealth', auto, bend angle=60, node distance=16mm]
\small
\node (FH) {$FH$};
\node (FK) [below left of=FH] {$FK$};
\node (GH) [below right of=FH] {$GH$};
\node (GK) [below right of=FK] {$GK$};
\path (FH) edge[double distance=2pt, arrows = {-Implies[]}] node {$\phi\psi$} (GK);
\path (FH) edge[double distance=2pt, arrows = {-Implies[]}] node[swap] {$F\psi$} (FK);
\path (FH) edge[double distance=2pt, arrows = {-Implies[]}] node {$\phi H$} (GH);
\path (FK) edge[double distance=2pt, arrows = {-Implies[]}] node[swap] {$\phi K$} (GK);
\path (GH) edge[double distance=2pt, arrows = {-Implies[]}] node {$G\psi$} (GK);
\end{tikzpicture}
\end{align*}
and conversely, $\phi H=\phi\,\id_H$ and $F\psi=\id_F\psi$.

A \emph{2-natural transformation} between 2-functors is like a natural transformation between functors, but one level up. Formally, let $\FF$ and $\CC$ be 2-categories and let $\Phi,\Psi:\FF\to\CC$ be 2-functors such that $\Phi O=\Psi O$ for each 0-cell $O$ of $\FF$. A \emph{2-natural transformation} $\phi:\Phi\to\Psi$ is a collection of 2-cells $\phi_S:\Phi S\to\Psi S$ indexed by 1-cells $S$ such that for each 2-cell $\sigma:S\to T$, the following diagram commutes:
\begin{align}
\begin{array}c
\begin{tikzpicture}[->, >=stealth', auto, node distance=20mm]
\small
\node (NW) {$\Phi S$};
\node (NE) [right of=NW] {$\Phi T$};
\node (SW) [below of=NW, node distance=12mm] {$\Psi S$};
\node (SE) [below of=NE, node distance=12mm] {$\Psi T$};
\path (NW) edge node {$\Phi\sigma$} (NE);
\path (NW) edge node[swap] {$\phi_S$} (SW);
\path (NE) edge node {$\phi_T$} (SE);
\path (SW) edge node[swap] {$\Psi\sigma$} (SE);
\end{tikzpicture}
\end{array}
\label{dia:2nat}
\end{align}
The restriction that $\Phi O=\Psi O$ for each 0-cell $O$ of $\FF$ allows $\phi_S$ to be a 2-cell. A 2-natural transformation is \emph{epic} if all components are epimorphisms.

Now let $\FF$ be the free 2-category on finite sets of generators $\OO$, $\TT$, and $\RR$ for the 0-, 1-, and 2-cells, respectively. Let $\CC$ be another 2-category, and let $\Phi,\Psi:\FF\to\CC$ be 2-functors such that $\Phi O=\Psi O$ for each $O\in\OO$.

Suppose that to each 1-cell generator $S\in\TT$ there is associated a 2-cell $(\sharp)_S:\Phi S\to\Psi S$ of $\CC$.
If $S = S_1\cdots S_n$ is a 1-cell of $\FF$, define $(\sharp)_S$ to be the horizontal composition
\begin{align*}
(\sharp)_S &= (\sharp)_{S_1}\cdots(\sharp)_{S_n}:\Phi S\to\Psi S.
\end{align*}
In particular, for $n=0$, for $I_O:O\to O$ an identity 1-cell of $\FF$, $(\sharp)_{I_O}:\Phi I_O\to\Psi I_O$ is the identity 1-cell on $\Phi O=\Psi O$.

\begin{theorem}
\label{thm:nattranssharp}
If the diagram
\begin{align}
\begin{array}c
\begin{tikzpicture}[->, >=stealth', auto, node distance=20mm]
\small
\node (NW) {$\Phi S$};
\node (NE) [right of=NW] {$\Phi T$};
\node (SW) [below of=NW, node distance=12mm] {$\Psi S$};
\node (SE) [below of=NE, node distance=12mm] {$\Psi T$};
\path (NW) edge node {$\Phi\sigma$} (NE);
\path (NW) edge node[swap] {$(\sharp)_S$} (SW);
\path (NE) edge node {$(\sharp)_T$} (SE);
\path (SW) edge node[swap] {$\Psi\sigma$} (SE);
\end{tikzpicture}
\end{array}
\label{dia:nattranssharp}
\end{align}
commutes for each 2-cell generator $\sigma:S\to T$ in $\RR$, then it commutes for all 2-cells $\sigma$ generated by $\RR$. Consequently, $(\sharp)$ is a 2-natural transformation $\Phi\to\Psi$ with components $(\sharp)_S$.
\end{theorem}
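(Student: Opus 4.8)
The plan is to prove the statement by induction on the structure of the 2-cell $\sigma$, using the fact that in the free 2-category $\FF$, every 2-cell is built from the generators in $\RR$ (and identity 2-cells) using vertical composition $\circ$ and horizontal composition (juxtaposition), while every 1-cell is a word $S_1\cdots S_n$ over $\TT$ and every 0-cell is a generator in $\OO$. Since $\Phi$ and $\Psi$ are 2-functors agreeing on 0-cells, and $(\sharp)$ has been defined on all 1-cells by horizontal composition of its values on generators, the content of Theorem~\ref{thm:nattranssharp} is exactly that the naturality square \eqref{dia:nattranssharp} propagates from generators to arbitrary 2-cells; the final sentence (that $(\sharp)$ is then a genuine 2-natural transformation in the sense of \eqref{dia:2nat}) is immediate once that propagation is established, because \eqref{dia:2nat} \emph{is} \eqref{dia:nattranssharp}.

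First I would record the two base cases. For a generator $\sigma\in\RR$ the square commutes by hypothesis. For an identity 2-cell $\id_S:S\to S$ on a 1-cell $S$, we have $\Phi\,\id_S=\id_{\Phi S}$ and $\Psi\,\id_S=\id_{\Psi S}$ (as $\Phi,\Psi$ are 2-functors), so the square degenerates to $(\sharp)_S\circ\id_{\Phi S}=\id_{\Psi S}\circ(\sharp)_S$, which holds trivially. Then I would do the two inductive steps. For \textbf{vertical composition}: suppose $\sigma=\sigma_2\circ\sigma_1$ with $\sigma_1:S\to T$ and $\sigma_2:T\to U$, and assume the square commutes for $\sigma_1$ and for $\sigma_2$. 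Stacking the two commuting squares vertically along their common edge $\Phi T\xrightarrow{(\sharp)_T}\Psi T$ and using functoriality of $\Phi$ and $\Psi$ on vertical composition, $\Phi(\sigma_2\circ\sigma_1)=\Phi\sigma_2\circ\Phi\sigma_1$ and likewise for $\Psi$, gives commutativity for $\sigma$. For \textbf{horizontal composition}: suppose $\sigma=\sigma_1\sigma_2$ with $\sigma_1:S_1\to T_1$, $\sigma_2:S_2\to T_2$ (so $\sigma:S_1S_2\to T_1T_2$), and assume the square commutes for $\sigma_1$ and $\sigma_2$. Here the key ingredients are: $(\sharp)_{S_1S_2}=(\sharp)_{S_1}(\sharp)_{S_2}$ and $(\sharp)_{T_1T_2}=(\sharp)_{T_1}(\sharp)_{T_2}$ by definition; $\Phi(\sigma_1\sigma_2)=\Phi\sigma_1\,\Phi\sigma_2$ and $\Psi(\sigma_1\sigma_2)=\Psi\sigma_1\,\Psi\sigma_2$ by 2-functoriality; and the interchange law \eqref{eq:2cell}, which lets us compute the horizontal composite of the two commuting squares as a single commuting square. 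Concretely, $(\sharp)_{T_1T_2}\circ\Phi(\sigma_1\sigma_2) = \bigl((\sharp)_{T_1}\circ\Phi\sigma_1\bigr)\bigl((\sharp)_{T_2}\circ\Phi\sigma_2\bigr) = \bigl(\Psi\sigma_1\circ(\sharp)_{S_1}\bigr)\bigl(\Psi\sigma_2\circ(\sharp)_{S_2}\bigr) = \Psi(\sigma_1\sigma_2)\circ(\sharp)_{S_1S_2}$, where the outer equalities are \eqref{eq:2cell} and the middle one is the induction hypothesis applied componentwise.

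The main obstacle, such as it is, is bookkeeping rather than mathematical depth: one must be careful that the whiskered forms $(\sharp)_S$ for \emph{composite} 1-cells $S$ are the ones given by the prescribed horizontal composition, and that the interchange law is applied with the factors grouped correctly (matching $\Phi\sigma_i$ with $(\sharp)_{T_i}$ and $(\sharp)_{S_i}$ with $\Psi\sigma_i$). It is also worth being explicit that the induction is well-founded because $\FF$ is \emph{free}: every 2-cell has a finite expression tree whose leaves are generators and identities and whose internal nodes are $\circ$ or juxtaposition, so structural induction on that tree is legitimate and the two inductive steps above exhaust the cases. Once the square commutes for all $\sigma$ generated by $\RR$, i.e.\ for all 2-cells of $\FF$, the collection $\{(\sharp)_S\}$ satisfies the defining condition \eqref{dia:2nat} of a 2-natural transformation $\Phi\to\Psi$, which is the final claim. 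I would close with the remark that the hypothesis $\Phi O=\Psi O$ on 0-cells is what makes each $(\sharp)_S$ a legitimate 2-cell (both $\Phi S$ and $\Psi S$ are 1-cells between the same pair of 0-cells), so no additional coherence is needed.
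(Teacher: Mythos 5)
Your proposal is correct and follows essentially the same route as the paper's proof: a structural induction over 2-cells of the free 2-category with base cases for generators and identities, a vertical-composition step via 2-functoriality, and a horizontal-composition step via 2-functoriality plus the interchange law \eqref{eq:2cell}, applied with exactly the grouping the paper uses. Your added remarks on the well-foundedness of the induction (from freeness of $\FF$) and on why $\Phi O=\Psi O$ makes each $(\sharp)_S$ a legitimate 2-cell are points the paper leaves implicit, but they do not change the argument.
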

\begin{proof}
For identities,
\begin{align*}
(\sharp)_S\circ\Phi(\id_S)
&= (\sharp)_S\circ\id_{\Phi S} && \text{since $\Phi$ is a 2-functor}\\
&= \id_{\Psi S}\circ(\sharp)_S\\
&= \Psi(\id_S)\circ(\sharp)_S && \text{since $\Psi$ is a 2-functor.}
\end{align*}
For vertical composition $\tau\circ\sigma:S\to U$ with $\sigma:S\to T$ and $\tau:T\to U$, if \eqref{dia:nattranssharp} commutes for $\sigma$ and $\tau$, that is,
\begin{align}
& (\sharp)_T\circ\Phi\sigma=\Psi\sigma\circ(\sharp)_S && (\sharp)_U\circ\Phi\tau=\Psi\tau\circ(\sharp)_T,\label{eq:vert}
\end{align}
then
\begin{align*}
& (\sharp)_U\circ\Phi(\tau\circ\sigma)\\
&= (\sharp)_U\circ\Phi\tau\circ\Phi\sigma && \text{since $\Phi$ is a 2-functor}\\
&= \Psi\tau\circ(\sharp)_T\circ\Phi\sigma && \text{by \eqref{eq:vert}, right-hand equation}\\
&= \Psi\tau\circ\Psi\sigma\circ(\sharp)_S && \text{by \eqref{eq:vert}, left-hand equation}\\
&= \Psi(\tau\circ\sigma)\circ(\sharp)_S && \text{since $\Psi$ is a 2-functor.}
\end{align*}
For horizontal composition $\sigma\tau:SU\to TV$ with $\sigma:S\to T$ and $\tau:U\to V$, if \eqref{dia:nattranssharp} commutes for $\sigma$ and $\tau$, that is,
\begin{align}
& (\sharp)_T\circ\Phi\sigma=\Psi\sigma\circ(\sharp)_S && (\sharp)_V\circ\Phi\tau=\Psi\tau\circ(\sharp)_U,\label{eq:horiz}
\end{align}
then
\begin{align*}
& (\sharp)_{TV}\circ\Phi(\sigma\tau)\\
&= ((\sharp)_{T}(\sharp)_{V})\circ((\Phi\sigma)(\Phi\tau)) && \text{since $\Phi$ is a 2-functor}\\
&= ((\sharp)_{T}\circ\Phi\sigma)((\sharp)_{V}\circ\Phi\tau) && \text{by \eqref{eq:2cell}}\\
&= (\Psi\sigma\circ(\sharp)_{S})(\Psi\tau\circ(\sharp)_{U}) && \text{by \eqref{eq:horiz}}\\
&= ((\Psi\sigma)(\Psi\tau))\circ((\sharp)_{S}(\sharp)_{U}) && \text{by \eqref{eq:2cell}}\\
&= \Psi(\sigma\tau)\circ(\sharp)_{SU} && \text{since $\Psi$ is a 2-functor.}
\tag*{\qed}
\end{align*}
\end{proof}

\begin{corollary}
\label{cor:nattranssharp}
Suppose the conditions of Theorem \ref{thm:nattranssharp} hold. If in addition $(\sharp)_S$ is an epimorphism for every 1-cell $S$ of $\FF$, then every 2-cell equation that holds under the interpretation $\Phi$ also holds under the interpretation $\Psi$. That is, for all 2-cells $\sigma,\tau:S\to T$ of $\FF$, if $\Phi\sigma=\Phi\tau$, then $\Psi\sigma=\Psi\tau$.
\end{corollary}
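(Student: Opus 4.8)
The plan is to read the result straight off the 2-naturality of $(\sharp)$ that Theorem~\ref{thm:nattranssharp} has already established, using the epicness hypothesis only at the very end to cancel. First I would note that $\sigma$ and $\tau$ are 2-cells of $\FF$, hence generated by $\RR$, so Theorem~\ref{thm:nattranssharp} applies to both: the diagram~\eqref{dia:nattranssharp} commutes at $\sigma$ and at $\tau$, which is to say
\[
(\sharp)_T\circ\Phi\sigma \;=\; \Psi\sigma\circ(\sharp)_S, \qquad (\sharp)_T\circ\Phi\tau \;=\; \Psi\tau\circ(\sharp)_S .
\]

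Next I would invoke the hypothesis $\Phi\sigma=\Phi\tau$ to equate the two left-hand sides, obtaining
\[
\Psi\sigma\circ(\sharp)_S \;=\; \Psi\tau\circ(\sharp)_S ,
\]
where $\circ$ is vertical composition of 2-cells, with $(\sharp)_S:\Phi S\to\Psi S$ appearing on the right and $\Psi\sigma,\Psi\tau:\Psi S\to\Psi T$ on the left. Finally, the extra assumption that $(\sharp)_S$ is an epimorphism — read as being right-cancellable with respect to vertical composition in the hom-category of 2-cells out of $\Phi S$ — lets me cancel $(\sharp)_S$ and conclude $\Psi\sigma=\Psi\tau$, which is exactly the claim.

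I do not expect a genuine obstacle: the substantive work was done in Theorem~\ref{thm:nattranssharp}, which upgrades the naturality square from 2-cell generators to all 2-cells, and the corollary merely observes that a 2-natural transformation with epic components \emph{detects}, not merely preserves, equalities on the source side. The only points worth stating explicitly are that ``epimorphism'' here means right-cancellable under vertical composition in the appropriate hom-category, and that the 1-cell $S$ occurring as the common source of $\sigma$ and $\tau$ is precisely the sort of 1-cell to which the ``in addition'' clause applies, so the cancellation step is legitimate.
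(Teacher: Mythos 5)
Your proposal is correct and is essentially identical to the paper's own proof: apply the naturality square from Theorem~\ref{thm:nattranssharp} at $\sigma$ and $\tau$, use $\Phi\sigma=\Phi\tau$ to deduce $\Psi\sigma\circ(\sharp)_S=\Psi\tau\circ(\sharp)_S$, and cancel the epimorphism $(\sharp)_S$. Your explicit remark that ``epimorphism'' means right-cancellable under vertical composition is a welcome clarification but does not change the argument.
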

\begin{proof}
We have
\begin{align*}
\Psi\sigma\circ(\sharp)_S
&= (\sharp)_T\circ\Phi\sigma && \text{by Theorem \ref{thm:nattranssharp}}\\
&= (\sharp)_T\circ\Phi\tau && \text{since $\Phi\sigma=\Phi\tau$}\\
&= \Psi\tau\circ(\sharp)_S && \text{by Theorem \ref{thm:nattranssharp}.}
\end{align*}
Since $(\sharp)_S$ is an epimorphism, $\Psi\sigma=\Psi\tau$.
\end{proof}

In our application, $\CC$ is \Cat, the 2-category of categories, functors, and natural transformations, and
\begin{itemize}
\item
$\OO = \{O\}$,
\item
$\TT = \{P:O\to O,Q:O\to O\}$,
\item
$\RR = \{\mu:PP\to P,\eta:I\to P,\\
\phantom{ddddd}\mu':QQ\to Q,\eta':I\to Q,\otimes:PQ\to QP\}$,
\item
$\Phi = \{O\mapsto\Set,P\mapsto L,Q\mapsto F,\mu\mapsto\mu^L,\eta\mapsto\eta^L,\\
\phantom{ddddd}\mu'\mapsto\mu^F,\eta'\mapsto\eta^F,\otimes\mapsto\otimes^L\}$,
\item
$\Psi = \{O\mapsto\Set,P\mapsto M,Q\mapsto F,\mu\mapsto\mu^M,\eta\mapsto\eta^M,\\
\phantom{ddddd}\mu'\mapsto\mu^F,\eta'\mapsto\eta^F,\otimes\mapsto\otimes^M\}$,
\item
$(\sharp)_{P} = \sharp$, $(\sharp)_{Q} = \id_F$.
\end{itemize}

We need to verify the conditions of Theorem \ref{thm:nattranssharp} and Corollary \ref{cor:nattranssharp}. For the condition \eqref{dia:nattranssharp} of Theorem \ref{thm:nattranssharp}, we must show that \eqref{dia:nattranssharp} commutes for each of the 2-cell generators in $\RR$. These conditions are
\begin{align*}
&
\begin{tikzpicture}[->, >=stealth', auto, node distance=18mm]
\scriptsize
\node (NW) {$LL$};
\node (NE) [right of=NW] {$L$};
\node (SW) [below of=NW, node distance=10mm] {$MM$};
\node (SE) [below of=NE, node distance=10mm] {$M$};
\path (NW) edge node {$\mu^L$} (NE);
\path (NW) edge node[swap] {$\sharp\sharp$} (SW);
\path (NE) edge node {$\sharp$} (SE);
\path (SW) edge node[swap] {$\mu^M$} (SE);
\end{tikzpicture}
&&
\begin{tikzpicture}[->, >=stealth', auto, node distance=18mm]
\scriptsize
\node (NW) {$I$};
\node (NE) [right of=NW] {$L$};
\node (SW) [below of=NW, node distance=10mm] {$I$};
\node (SE) [below of=NE, node distance=10mm] {$M$};
\path (NW) edge node {$\eta^L$} (NE);
\path (NW) edge node[swap] {$\id$} (SW);
\path (NE) edge node {$\sharp$} (SE);
\path (SW) edge node[swap] {$\eta^M$} (SE);
\end{tikzpicture}
&&
\begin{tikzpicture}[->, >=stealth', auto, node distance=18mm]
\scriptsize
\node (NW) {$LF$};
\node (NE) [right of=NW] {$FL$};
\node (SW) [below of=NW, node distance=10mm] {$MF$};
\node (SE) [below of=NE, node distance=10mm] {$FM$};
\path (NW) edge node {$\otimes^L$} (NE);
\path (NW) edge node[swap] {$\sharp F$} (SW);
\path (NE) edge node {$F\sharp$} (SE);
\path (SW) edge node[swap] {$\otimes^M$} (SE);
\end{tikzpicture}
\end{align*}
plus two more conditions for $\mu^F$ and $\eta^F$, which only involve identities and are trivial.
The rightmost diagram is just \eqref{eq:otimesdef}, which was verified in Lemma \ref{lem:otimesdef}. A complete proof for the remaining two is given in \S\ref{sec:LLMM}.

For Corollary \ref{cor:nattranssharp}, we must show that all $(\sharp)_S$ are epimorphisms. In light of \eqref{eq:horizC}, we need only show that $\sharp$ is an epimorphism, and that the functors $L$, $F$, and $M$ preserve epimorphisms. For any set $X$, $\sharp_X:LX\to MX$ takes a list of elements of $X$ to the multiset of its elements and is clearly surjective. That the functors $L$, $F$, and $M$ preserve epimorphisms is evident after a few moments' thought; but we provide an explicit proof in \S\ref{sec:LLMM}.

These results lead to an alternative proof of Theorem \ref{thm:distribM} using a general method to transfer properties involving $L$ to properties involving $M$. Since our application satisfies the conditions of Theorem \ref{thm:nattranssharp} and Corollary \ref{cor:nattranssharp}, we have that any 2-cell equation that holds under the interpretation $\Phi$ also holds under $\Psi$. This allows us to derive several needed equations involving $M$ all at once from the corresponding equations involving $L$: the monad equations for $\mu^M:MM\to M$ and $\eta^M:I\to M$, the naturality of $\otimes^M$, and the equations for the distributive law $\otimes^M:MF\to FM$.

To see how this works, observe that the diagrams for these properties (Figures \ref{fig:nattrans} and \ref{fig:distrib} for example) all have the same character: they are all cubical diagrams consisting of a diagram involving $L$ on the top face, the same diagram on the bottom face with $L$ replaced by $M$, and vertical arrows connecting them consisting of components of the 2-natural transformation $(\sharp)$. These components just replace $L$'s with $M$'s and are epimorphisms. The vertical faces all commute, because $(\sharp)$ is a 2-natural transformation. One can then conclude that the bottom face commutes whenever the top face does by the same argument as in Lemma \ref{lem:otimesdef}.

\section{Conclusion}

We have provided a simplified proof of a distributive law between the finite distribution and finite multiset monads, first proved by Jacobs \cite{Jacobs21}. Unlike Jacobs's proof, we do not require any combinatorial machinery, which we view as not germane. The proof consists of two parts: (a) a distributive law between the distribution and list monads, which is more intuitive and easier to prove equationally due to the ordering of lists, and (b) a general theorem on 2-categories to the effect that category-theoretic laws, such as monad and distributive laws, are preserved by epic 2-natural transformations. In this case, laws involving lists are mapped to laws involving multisets via the Parikh map that forgets order, taking a list to its multiset of elements. 

We believe that Jacobs's result has the potential to significantly advance the understanding of the interaction of nondeterminism and probability, especially in light of previous well-known difficulties combining the distribution and powerset monads. We did not review Jacobs's proof in this paper, as it is heavily technical and not related to our approach in any meaningful way. However, we do believe that important results deserve more than one proof, and that some readers may find our approach a more accessible alternative.

An interesting and novel aspect of our work is the general 2-categorical formulation of the problem, in which 2-categories are presented as a logical vehicle for the equational logic of natural transformations. There are several innovations here that we believe deserve further study and that may find further application: the use of free 2-categories as logical syntax, the use of 2-functors as algebraic interpretations, and the observation that epic 2-natural transformations preserve commutative diagrams involving natural transformations. We note explicitly the elegant way the proof plays out with both horizontal and vertical composition, leading to the preservation of an entire class of equations between natural transformations as special cases. We suspect that there may be potential for further applications of this technique involving other classes of coherence diagrams.

For simplicity, we have derived the distributive law for the finite multiset monad and finite distribution monad only; however, we are confident that the proof generalizes to the full Giry monad of measures on an arbitrary measurable spaces, as claimed in \cite{Dash21}, using the same technique.

\section*{Acknowledgments}

The support of the National Science Foundation under grant CCF-2008083 is gratefully acknowledged.

\bigskip




\appendix

\section{Extra proofs}
\subsection{$\otimes^L:LF\to FL$ is a distributive law}
\label{sec:distribL}

In this section we verify that $\otimes^L:LF\to FL$ is a distributive law. Because we are working with lists instead of multisets, the task is notationally much less cumbersome and involves only elementary equational reasoning. This is one major benefit of our approach.

\begin{proof}[Proof of Theorem \ref{thm:LF}]
We first argue that the diagram
\begin{align*}
\begin{tikzpicture}[->, >=stealth', auto, node distance=20mm]
\small
\node (NW) {$LFX$};
\node (NE) [right of=NW] {$FLX$};
\node (SW) [below of=NW, node distance=12mm] {$LFY$};
\node (SE) [below of=NE, node distance=12mm] {$FLY$};
\path (NW) edge node {$\otimes^L_X$} (NE);
\path (NW) edge node[swap] {$LFh$} (SW);
\path (NE) edge node {$FLh$} (SE);
\path (SW) edge node[swap] {$\otimes^L_Y$} (SE);
\end{tikzpicture}
\end{align*}
commutes, thus $\otimes^L:LF\to FL$ is a natural transformation.
 
Let $\bar f = \bx{\seq f1n}\in LFX$, $f_i\in FX$. By definition, for $\bar a=\bx{\seq a1n}\in X^n$,
\begin{align}
& \otimes^L_X(\bar f)(\bar a) = \prod_{i=1}^n f_i(a_i)\label{eq:otimesnat0}
\end{align}
and $\otimes^L_X(\bar f)(\bar a) = 0$ if the lists $\bar f$ and $\bar a$ are not the same length.
The support of $\otimes^L_X(\bar f)$ is $\supp\otimes^L_X(\bar f) = \prod_{i=1}^n\supp f_i$.

If $h:X\to Y$, then $Fh:FX\to FY$.
If $f\in FX$ and $b\in Y$, then $Fh(f)\in FY$ and
\begin{align}
& Fh(f) = \lam{b\in Y}{\sum_{\substack{a\in X\\h(a)=b}} f(a)}\nonumber\\
& Fh(f)(b) = \sum_{\substack{a\in X\\h(a)=b}} f(a).\label{eq:otimesnat1}
\end{align}
Although there may be infinitely many $a\in X$ with $h(a)=b$, since $f$ is of finite support, at most finitely many of these contribute to the sum.

Using \eqref{eq:otimesnat1} and \eqref{eq:otimesnat0},
\begin{align}
FLh(\otimes^L_X(\bar f))(\bar b)
&= \sum_{\substack{\bar a\in X^n\\Lh(\bar a)=\bar b}} \otimes^L_X(\bar f)(\bar a)
= \sum_{\substack{\bar a\in X^n\\Lh(\bar a)=\bar b}} \prod_{i=1}^n f_i(a_i).\label{eq:otimesnat5}
\end{align}
On the other hand, using \eqref{eq:otimesnat0} and \eqref{eq:otimesnat1},
\begin{align}
\otimes^L_Y(LFh(\bar f))(\bar b)
&= \prod_{i=1}^n Fh(f_i)(b_i)
= \prod_{i=1}^n \sum_{\substack{a\in X\\h(a)=b_i}} f_i(a).\label{eq:otimesnat6}
\end{align}
The right-hand sides of \eqref{eq:otimesnat5} and \eqref{eq:otimesnat6} are equal by the distributive law of semirings, thus
\begin{align*}
\otimes^L_Y(LFh(\bar f))(\bar b)
&= FLh(\otimes^L_X(\bar f))(\bar b).
\end{align*}
As $\bar f$ and $\bar b$ were arbitrary,
\begin{align*}
\otimes^L_Y\circ LFh &= FLh\circ\otimes^L_X.
\end{align*}

We must show that the diagrams of Fig.~\ref{fig:distribL} commute.
Let us start with the easy ones. For $f\in FX$,
\begin{align*}
& \otimes^L_X(\eta^L_{FX}(f))\\
&= \otimes^L_X(\bx{f}) = \lam{\bx{a}}{f(a)}\\
&= \lam y{\sum_{\eta^L_X(a)=y} f(a)} = F\eta^L_X(f).
\end{align*}
For $\bar a = \bx{\seq a1n}\in LX$,
\begin{align*}
& \otimes^L_X(L\eta^F_X(\bar a))\\
&= \otimes^L_X(\bx{\eta^F_X(a_1),\ldots,\eta^F_X(a_n)})\\
&= \otimes^L_X(\bx{\lam b{[b=a_1]},\ldots,\lam b{[b=a_n]}})\\
&= \lam{\bx{\seq b1n}\in X^n}{\prod_{i=1}^n (\lam b{[b=a_i]})(b_i)}\\
&= \lam{\bx{\seq b1n}\in X^n}{\prod_{i=1}^n {[b_i=a_i]}}\\
&= \lam{\bar b\in X^n}{[\bar b=\bar a]}\\
&= \eta^F_{LX}(\bar a).
\end{align*}

For the top-left diagram of Fig.~\ref{fig:distribL}, we need to show
\begin{align*}
F\mu^L_X\circ \otimes^L_{LX}\circ L\otimes^L_X = \otimes^L_X\circ \mu^L_{FX}.
\end{align*}
Let $\bx{\seq{\bar f}1n}\in LLFX$, where $\bar f_i = \bx{f_{i1},\ldots,f_{ik_i}}$, $1\le i\le n$. Then
\begin{align*}
& F\mu^L_X(\otimes^L_{LX}(L\otimes^L_X(\bx{\seq{\bar f}1n})))\\
&= F\mu^L_X(\otimes^L_{LX}(\bx{\otimes^L_X(\bar f_1),\ldots,\otimes^L_X(\bar f_1)}))\\
&= F\mu^L_X(\otimes^L_{LX}(\bx{\lam{\bar a_1}{\prod_{i=1}^{k_1}f_{1i}(a_{1i})},\ldots,\lam{\bar a_n}{\prod_{i=1}^{k_n}f_{1i}(a_{1i})}})))\\
&= F\mu^L_X(\lam{\bx{\bar a_1,\ldots,\bar a_n}}{\prod_{i=1}^n\prod_{j=1}^{k_i}f_{ij}(a_{ij})})\\
&= \lam y{\sum_{\mu^L_X(x)=y} (\lam{\bx{\bar a_1,\ldots,\bar a_n}}{\prod_{i=1}^n\prod_{j=1}^{k_i}f_{ij}(a_{ij})})(x)}\\
&= \lam{\bx{a_{11},\ldots,a_{1k_1},\ldots,a_{n1},\ldots,a_{nk_n}}}{\prod_{i=1}^n\prod_{j=1}^{k_i}f_{ij}(a_{ij})}\\
&= \otimes^L_X(\bx{f_{11},\ldots,f_{1k_1},\ldots,f_{n1},\ldots,f_{nk_n}})\\
&= \otimes^L_X(\mu^L_{FX}(\bx{\seq{\bar f}1n})).
\end{align*}

Finally, for the bottom-left diagram of Fig.~\ref{fig:distribL}, we need to show
\begin{align*}
\mu^F_{LX}\circ F\otimes^L_X\circ \otimes^L_{FX} = \otimes^L_X\circ L\mu^F_X.
\end{align*}
Let $\bar f = \bx{f_1,\ldots,f_n}\in LFFX$, $f_i\in FFX = \reals^{\reals^X}$. One path in the diagram gives
\begin{align}
& \mu^F_{LX}(F\otimes^L_X(\otimes^L_{FX}(\bar f)))\nonumber\\
&= \mu^F_{LX}(F\otimes^L_X(\lam{\bar g}{\prod_{i=1}^n f_i(g_i)}))\nonumber\\
&= \mu^F_{LX}(\lam y{\sum_{\otimes^L_X(x)=y} (\lam{\bar g}{\prod_{i=1}^n f_i(g_i)})(x)})\nonumber\\
&= \mu^F_{LX}(\lam y{\sum_{\otimes^L_X(\bar g)=y} {\prod_{i=1}^n f_i(g_i)}})\nonumber\\
&= \lam{a\in LX}{\sum_{g\in FLX}{(\lam y{\sum_{\otimes^L_X(\bar g)=y} {\prod_{i=1}^n f_i(g_i)}})(g)\cdot g(a)}}\nonumber\\
&= \lam{a\in LX}{\sum_{g\in FLX}{({\sum_{\otimes^L_X(\bar g)=g} {\prod_{i=1}^n f_i(g_i)}})\cdot g(a)}}\nonumber\\
&= \lam{a\in LX}{\sum_{\bar g\in (FX)^n}{{{\prod_{i=1}^n f_i(g_i)}}\cdot (\otimes^L_X(\bar g))(a)}}\nonumber\\
&= \lam{\bx{\seq a1n}\in LX}{\sum_{\bar g\in (FX)^n}{{{\prod_{i=1}^n f_i(g_i)}}\cdot \prod_{i=1}^n g_i(a_i)}}\nonumber\\
&= \lam{\bx{\seq a1n}}{\sum_{\bar g\in (FX)^n}\prod_{i=1}^n{f_i(g_i)\cdot g_i(a_i)}}.\label{eq:zzdr1}
\end{align}
The other path gives
\begin{align}
& \otimes^L_X(L\mu^F_X(\bar f))\nonumber\\
&= \otimes^L_X(\bx{\mu^F_X(f_1),\ldots,\mu^F_X(f_n)})\nonumber\\
&= \otimes^L_X(\bx{\lam{a\in X}{\sum_{g\in FX}{f_1(g)\cdot g(a)}},\ldots,\nonumber\\
& \phantom{eeeeeeee.}\lam{a\in X}{\sum_{g\in FX}{f_n(g)\cdot g(a)}}}))\nonumber\\
&= \lam{\bx{\seq a1n}}{\prod_{i=1}^n \sum_{g\in FX}{f_i(g)\cdot g(a_i)}},\label{eq:zzdr2}
\end{align}
and \eqref{eq:zzdr1} and \eqref{eq:zzdr2} are equivalent by distributivity of multiplication over addition.
\end{proof}

\subsection{Conditions of Theorem \ref{thm:nattranssharp}}
\label{sec:LLMM}

We must also establish the remaining conditions \eqref{dia:nattranssharp} required for the application of Theorem \ref{thm:nattranssharp}. Again, the proofs involve only elementary equational reasoning.
\begin{theorem}
\label{thm:LLMM}
The following diagrams commute:
\begin{align*}
&
\begin{tikzpicture}[->, >=stealth', auto, node distance=20mm]
\small
\node (NW) {$LL$};
\node (NE) [right of=NW] {$L$};
\node (SW) [below of=NW, node distance=12mm] {$MM$};
\node (SE) [below of=NE, node distance=12mm] {$M$};
\path (NW) edge node {$\mu^L$} (NE);
\path (NW) edge node[swap] {$\sharp\sharp$} (SW);
\path (NE) edge node {$\sharp$} (SE);
\path (SW) edge node[swap] {$\mu^M$} (SE);
\end{tikzpicture}
&&
\begin{tikzpicture}[->, >=stealth', auto, node distance=20mm]
\small
\node (NW) {$I$};
\node (NE) [right of=NW] {$L$};
\node (SW) [below of=NW, node distance=12mm] {$I$};
\node (SE) [below of=NE, node distance=12mm] {$M$};
\path (NW) edge node {$\eta^L$} (NE);
\path (NW) edge node[swap] {$\id$} (SW);
\path (NE) edge node {$\sharp$} (SE);
\path (SW) edge node[swap] {$\eta^M$} (SE);
\end{tikzpicture}
\end{align*}
\end{theorem}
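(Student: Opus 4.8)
The plan is to verify each square by a short element-level computation, using only what has already been recorded above: that for every set $X$ the map $\sharp_X$ is a monoid homomorphism $(LX,\cdot,\bx{})\to(MX,+,0)$ with $\sharp_X(\bx a)=\eta^M_X(a)$. The right-hand square needs nothing beyond the observation made in the discussion of the Parikh map: $\sharp_X(\eta^L_X(a))=\sharp_X(\bx a)=\lam{b\in X}{[b=a]}=\eta^M_X(a)$, and since the left vertical arrow is the identity on $I$, this is exactly commutativity.

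For the left-hand square I would fix an arbitrary $\bx{\seq x1n}\in LLX$ with each $x_i\in LX$, and compare the two composites pointwise at an arbitrary $a\in X$. Going right then down, $\mu^L_X(\bx{\seq x1n})$ is the concatenation $x_1\cdots x_n$, so by the homomorphism property $\sharp_X(x_1\cdots x_n)=\sum_{i=1}^n\sharp_X(x_i)$, whose value at $a$ is $\sum_{i=1}^n\sharp_X(x_i)(a)$. Going down then right, write $\sharp\sharp=M\sharp\circ\sharp L$: the map $\sharp_{LX}$ sends $\bx{\seq x1n}$ to the multiset of lists given by $x\mapsto\len{\set{i}{x_i=x}}$; pushing forward along $M\sharp_X$ (which sums multiplicities over preimages) gives the multiset over $MX$ with value $\len{\set{i}{\sharp_X(x_i)=m}}$ at $m\in MX$; and $\mu^M_X$ of this, evaluated at $a$, equals $\sum_{m\in MX}\len{\set{i}{\sharp_X(x_i)=m}}\cdot m(a)$. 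Partitioning $\{1,\ldots,n\}$ according to the value of $\sharp_X(x_i)$ rewrites the last sum as $\sum_{i=1}^n\sharp_X(x_i)(a)$, matching the other composite. Since $a$ and $\bx{\seq x1n}$ were arbitrary, $\sharp\circ\mu^L=\mu^M\circ\sharp\sharp$.

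The only place that asks for care rather than cleverness is the multiplicity bookkeeping in the middle step, when several of the $x_i$ coincide or have equal Parikh images; the grouping of the index set by the value of $\sharp_X(x_i)$ makes it routine, and the same argument applies verbatim if one instead factors $\sharp\sharp$ as $\sharp M\circ L\sharp$. (Alternatively one could invoke the universal property of the free monoid, but the direct computation is the shorter route.) With these two squares verified, together with Lemma~\ref{lem:otimesdef} for the $\otimes$ generator and the trivial squares for $\mu^F$ and $\eta^F$, all hypotheses of Theorem~\ref{thm:nattranssharp} are in place.
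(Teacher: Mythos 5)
Your proposal is correct and follows essentially the same route as the paper's own proof: the right square is the one-line computation $\sharp_X(\eta^L_X(a))=\sharp_X(\bx a)=\eta^M_X(a)$, and the left square is verified pointwise by showing both composites send $\bx{\seq x1n}$ to $\lam{a}{\sum_{i=1}^n\sharp_X(x_i)(a)}$, with the only cosmetic difference being that you factor $\sharp\sharp$ as $M\sharp\circ\sharp L$ where the paper uses the (equal) factorization $\sharp M\circ L\sharp$ and writes the multiplicity bookkeeping as an Iverson-bracket double sum rather than a partition of the index set.
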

\begin{proof}
For the left-hand diagram, let $\ell_i = \bx{a_{i1},\ldots,a_{ik_i}}$, $a_{ij}\in X$, $1\le i\le n$, and let
\begin{align*}
f_i
&= \sharp_X(\ell_i) = \sum_{j=1}^{k_i}\sharp_X(\bx{a_{ij}})\\
&= \sum_{j=1}^{k_i}\lam{a\in X}{[a=a_{ij}]},\ \ 1\le i\le n.
\end{align*}
Recall that $\sharp\sharp = \sharp M\circ L\sharp$. Then
\begin{align*}
& \sharp_X(\mu^L_X(\bx{\ell_1,\ldots,\ell_n}))\\
&= \sharp_X(\bx{a_{11},\ldots,a_{1k_1},\ldots,a_{n1},\ldots,a_{nk_n}})\\
&= \sum_{i=1}^n\sum_{j=1}^{k_i} \sharp_X(\bx{a_{ij}})
= \sum_{i=1}^n f_i,
\end{align*}
\begin{align*}
&\mu^M_X(\sharp_{MX}(L\sharp_X(\bx{\seq\ell 1n})))\\
&= \mu^M_X(\sharp_{MX}(\bx{\sharp_X(\ell_1),\ldots,\sharp_X(\ell_n)}))\\
&= \mu^M_X(\sharp_{MX}(\bx{f_1,\ldots,f_n}))
= \mu^M_X(\sum_{i=1}^n \sharp_{MX}(\bx{f_i}))\\
&= \mu^M_X(\sum_{i=1}^n \lam{h\in FX}{[h=f_i]})\\
&= \lam{a\in X}{\sum_{h\in FX} (\sum_{i=1}^n \lam{h\in FX}{[h=f_i]})(h)\cdot h(a)}\\
&= \lam{a\in X}{\sum_{h\in FX} \sum_{i=1}^n [h=f_i]\cdot h(a)}\\
&= \lam{a\in X}{\sum_{i=1}^n f_i(a)}
= \sum_{i=1}^n f_i.
\end{align*}
Finally, for the right-hand diagram, for $a\in X$,
\begin{align*}
\sharp_X(\eta^L_X(a))
&= \sharp_X(\bx a)
= \lam{b\in X}{[b=a]}
= \eta^M(a).
\tag*{\qed}
\end{align*}
\end{proof}

\begin{lemma}
The functors $L$, $F$, and $M$ preserve epimorphisms.
\end{lemma}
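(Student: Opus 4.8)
The plan is to show that each of the three functors is \emph{surjective on underlying sets} of morphisms in the relevant sense, and then observe that in $\Set$ a morphism is an epimorphism precisely when it is surjective; hence it suffices to check that if $h:X\to Y$ is surjective then $Lh$, $Fh$, and $Mh$ are surjective. First I would handle $L$: given a list $\bx{\seq b1n}\in LY$, pick for each $i$ some $a_i\in h^{-1}(b_i)$ (possible by surjectivity of $h$), and note $Lh(\bx{\seq a1n}) = \bx{\seq b1n}$. Next I would handle $F$ (which immediately gives $M$ as a restriction, or one may treat $M$ separately with exactly the same argument over $\naturals$): given $g\in FY$, write $\supp g = \{b_1,\dots,b_k\}$, choose $a_j\in h^{-1}(b_j)$ for each $j$, and define $f\in FX$ by $f(a_j) = g(b_j)$ and $f(a) = 0$ otherwise; then $f$ has finite support and, by the defining formula $Fh(f) = \lam{b\in Y}{\sum_{h(a)=b} f(a)}$, one checks $Fh(f)(b_j) = g(b_j)$ and $Fh(f)(b) = 0$ for $b\notin\supp g$, so $Fh(f) = g$. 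For $M$ the same construction works verbatim since the chosen values $g(b_j)$ lie in $\naturals$, so $f\in MX$.

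For completeness I would spell out why ``surjective implies epic'' is enough: in $\Set$, if $r:A\to B$ is surjective and $u\circ r = v\circ r$ for $u,v:B\to C$, then for any $b\in B$ choose $a$ with $r(a)=b$ and conclude $u(b) = u(r(a)) = v(r(a)) = v(b)$, so $u=v$; thus $r$ is an epimorphism. Conversely every epimorphism in $\Set$ is surjective, so the two notions coincide and it is legitimate to reduce the functorial preservation statement to preservation of surjections, which is what the explicit constructions above establish.

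I do not expect any real obstacle here; the only point requiring a moment's care is the possibility that $h^{-1}(b)$ is infinite or that different support points $b_j$ could have preimages that collide — but the construction sidesteps this by choosing a \emph{single} preimage $a_j$ for each of the finitely many $b_j\in\supp g$ (these are automatically distinct since the $b_j$ are distinct and $h$ is a function), so the resulting $f$ is well defined and finitely supported. The argument for $L$ is even more transparent since lists are finite by definition. Hence all three functors preserve epimorphisms, completing the verification needed for Corollary~\ref{cor:nattranssharp}.
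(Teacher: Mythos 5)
Your proof is correct, but it takes a more hands-on route than the paper. You verify surjectivity of $Lh$, $Fh$, and $Mh$ separately by explicitly constructing preimages (a list of chosen preimages for $L$; a function supported on one chosen preimage of each of the finitely many support points for $F$ and $M$), and your constructions are sound --- in particular you correctly note that the chosen $a_j$ are distinct so the sum defining $Fh(f)(b_j)$ collapses to the single term $g(b_j)$. The paper instead proves a strictly more general statement in three lines: by the axiom of choice every surjection in \Set\ splits, every functor (on any category) preserves split epimorphisms, and split epimorphisms are epimorphisms; hence \emph{every} functor defined on \Set\ preserves epimorphisms, with no need to inspect $L$, $F$, or $M$ at all. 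The trade-off is that the paper's argument invokes full choice to split an arbitrary surjection, whereas your per-element constructions only ever make finitely many choices for each target object and are thus essentially choice-free; on the other hand, your argument must be redone for each new functor, while the paper's covers any future application for free. Both are complete proofs of the stated lemma.
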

\begin{proof}
Actually, all functors defined on \Set\ preserve epimorphisms. By the axiom of choice, every surjection $f:X\to Y$ has a right inverse $f':Y\to X$, that is, $f\circ f'=\id_Y$. This says that $f$ is a split epimorphism, and all functors $G$ in any category preserve split epimorphisms:
\begin{align*}
Gf\circ Gf' &= G(f\circ f') = G\id_Y = \id_{GY}.
\end{align*}
A split epimorphism $f$ in any category is an epimorphism, since
\begin{align*}
e_1\circ f = e_2\circ f &\Imp e_1\circ f\circ f' = e_2\circ f\circ f' \Imp e_1 = e_2.
\tag*{\qed}
\end{align*}
\end{proof}

\end{document}